\newtheorem{thm}{Theorem}[section]
\newtheorem{prop}[thm]{Proposition}
\theoremstyle{remark}
\theoremstyle{definition}
\theoremstyle{definition}
\newtheorem{example}{Example}
\newtheorem{definition}{Definition}
      \def\@setcopyright{}
      \def\serieslogo@{}
\begin{document}

\author{Adam Graham-Squire}
\address{Adam Graham-Squire, Department of Mathematical Sciences, High Point University, 1 University Parkway, High Point, NC, 27268}
\email{agrahams@highpoint.edu}

\author{Matthew I. Jones}
\address{Matthew I. Jones, Yale Institute for Network Science, Yale University, 17 Hillhouse Ave, New Haven, CT, 06511}
\email{matt.jones@yale.edu}

\author{David McCune}
\address{David McCune, Department of Mathematics and Data Science, William Jewell College, 500 College Hill, Liberty, MO, 64068-1896}
\email{mccuned@william.jewell.edu}

\title[New fairness criteria for truncated ballots]{New fairness criteria for truncated ballots in multi-winner ranked-choice elections}

\begin{abstract}
In real-world elections where voters cast preference ballots, voters often provide only a partial ranking of the candidates. Despite this empirical reality, prior social choice literature frequently analyzes fairness criteria under the assumption that all voters provide a complete ranking of the candidates. We introduce new fairness criteria for multiwinner ranked-choice elections concerning truncated ballots. In particular, we define notions of the \emph{independence of losing voters blocs} and \emph{independence of winning voters blocs}, which state that the winning committee of an election should not change when we remove partial ballots which rank only losing candidates, and the winning committee should change in reasonable ways when removing ballots which rank only winning candidates. Of the voting methods we analyze, the Chamberlin-Courant rule performs the best with respect to these criteria, the expanding approvals rule performs the worst, and the method of single transferable vote falls in between.
\end{abstract}

 \subjclass[2010]{Primary 91B12; Secondary 91B14}

 \keywords{single transferable vote, fairness criteria, empirical results, partial ballots}

\maketitle


\section{Introduction}

This article introduces new fairness criteria for multiwinner ranked-choice elections which stipulate how voting methods should behave when certain types of partial ballots are removed from the ballot data. Consider the following scenario: an election contains seven candidates $C_1, \dots, C_7$ and voters cast preference ballots (possibly providing only partial preferences) to determine a winning committee of size three. Suppose the winning committee under a given voting method is $\{C_1, C_2, C_3\}$ but when we remove a handful of ballots which rank only candidate $C_7$, the winning committee changes to $\{C_4, C_5, C_6\}$. It seems normatively undesirable that voters who cast ballots ranking only a losing candidate, and therefore achieve no representation, can determine the composition of the winner set by participating in the election. 
Put another way, it seems strange that candidate $C_1$'s ability to win a seat depends on the participation of voters who care only about $C_7$ (and are presumably indifferent about the other candidates). 
Similarly, suppose we remove some ballots which rank only $C_1$ and the winning committee changes to $\{C_1, C_4, C_5\}$. It seems undesirable that voters who support only $C_1$, and retain their candidate in the winner set if they abstain, could affect which other voters achieve representation. We articulate three related fairness criteria along these lines. The first is the \emph{independence of losing voter blocs (ILVB) criterion}, which stipulates that if we remove partial ballots which rank only losing candidates then the winning committee should not change. The other two criteria are analogues of the ILVB criterion but for winning candidates. The \emph{independence of winning voter blocs (IWVB) criterion} requires that if we remove partial ballots which rank only winning candidates then the only allowable change to the winning committee is that some subset of these winning candidates is replaced in the committee. Similarly, the \emph{IWVB}$^*$ \emph{criterion} stipulates that if we remove partial ballots which rank only winning candidates and all of these candidates retain their seats after this removal, then the winning committee should not change.
We provide worst-case analyses for several voting methods with respect to these criteria and give empirical results using a large dataset of real-world multiwinner elections.

Much of the previous social choice literature studies the case of single-winner elections, often under the assumption that voters provide a complete ranking of the candidates. The modern theory of fairness criteria and single-winner elections dates to the pioneering work of Kenneth Arrow \cite{A51}, which assumes complete preference information. For a summary of voting criteria in the single-winner case with complete preferences, see \cite{N99}. Some researchers have proposed criteria explicitly built around the concept of partial ballots, such as non-manipulability by sincere preference truncation \cite{B82,FB84}. Another example is Woodall's ``mono-add-plump'' criterion, which states that if candidate $A$ is the winner of an election and we add ballots which rank $A$ first and rank no other candidates, then $A$ should still win \cite{W94}. But fairness criteria explicitly concerning partial ballots tend to receive less attention in the social choice literature.

The multiwinner ranked-choice setting is less-studied than the single-winner, and correspondingly there has been less study of fairness criteria for multiwinner elections. While some single-winner criteria can be adapted to the multiwinner setting with little to no modification, many classical criteria (such as the Condorcet criterion) do not translate as easily. For an introduction to fairness criteria for multiwinner voting methods, see \cite{EFSS, FM92}. A complicating factor in the multiwinner setting is the differing goals of various multiwinner voting methods. Some aim for proportional representation and others do not, for example. We focus on multiwinner voting rules designed for proportional representation, whereas much of the prior literature focuses on axioms regarding the meaning of proportional representation \cite{AL20,BP23,D84,M95,SF17}. There are different definitions of  ``proportional''  in a multiwinner ranked-choice setting, and in Section \ref{proportionality}, we focus on Dummett's notion of proportionality for solid coalitions \cite{D84}. Our independence of  voter blocs criteria are closely related to the independence of irrelevant alternatives criterion for single-winner elections and its variants, including notions of the so-called ``spoiler effect'' \cite{A51,Bo,HC15,MW,M19}. Additionally, certain kinds of no-show paradoxes \cite{FB83} are a special case of violations of our ILVB criterion.

In this article, the primary voting method of interest is the version of single transferable vote (STV) used in Scottish local government elections because these elections are the data source for our empirical results. To give context for our results concerning Scottish STV, we also study several other proportional methods such as Meek STV~\cite{M94}, the Chamberlin-Courant voting rule \cite{Cham-Cour}, and the expanding approvals rule proposed by Aziz and Lee \cite{AL20}. In our worst-case analysis we show that, for each voting method other than Chamberlin-Courant, it is possible to construct elections in which the removal of some \emph{bullet votes} (ballots that rank a single candidate) creates a new winner set disjoint from the original winner set. For the Chamberlin-Courant rule, we prove the winning committee does not change when removing partial ballots that only rank losers, or removing ballots that rank only winners who retain their seats.

We find that in approximately 30\% of elections from our real-world Scottish dataset, at least one voting method returns a violation of our voting bloc criteria, with the percentage for a fixed voting method varying substantially. For example, about 2\% of elections return a violation of the loser bloc criterion for Meek STV, which rises to 5\% for the expanding approvals rule. The Chamberlin-Courant rule and its variants have the fewest violations while the expanding approvals rule has the most. Instead of just selecting a different candidate from the same party, these violations can have significant political ramifications. The vast majority of violations result in seats changing political parties, altering the political landscape and perhaps influencing policy outcomes. 

The paper is structured as follows. In Section \ref{prelim} we provide  all necessary preliminary information. In particular, we define our voting methods  and voter-bloc fairness criteria. We also provide a description of the real-world dataset of elections we use. In Section \ref{worst_case} we provide  worst-case analyses for violations of our criteria for all voting methods defined in Section \ref{prelim}. Section \ref{empirical_results} details our search for violations of our criteria in a large dataset of real-world multiwinner elections. In Section \ref{proportionality} we briefly discuss the relationship between proportionality and our voter bloc criteria, and Section \ref{conclusion} concludes.

\section{Preliminaries}\label{prelim}

We study only ranked-choice elections where each voter provides a (possibly partial) linear ranking of candidates using a preference ballot. Ballots are combined into a \emph{preference profile} $P$, which provides the number of ballots cast of each type. Let $m$ denote the number of candidates in an election, $V$ denote the number of voters in an election, and $k$ denote the size of the winner set, which equals the number of available legislative seats.  We refer to an ordered pair $(P,k)$ as an \emph{election}.

A \emph{multiwinner voting method} (or multiwinner voting rule) is a function which takes as input an election and outputs a winning committee of size $k$. Even though a voting method may output multiple committees due to ties, empirical examples of ties in elections are virtually nonexistent, so we avoid the issue of ties throughout our work and assume a unique winner set. We denote the winning committee under a given voting rule $W(P,k)$. The voting rule under consideration will be clear any time we use such notation, so we do not incorporate the voting method into our winner set notation.

\subsection{Multiwinner Voting Methods}

We briefly define the voting rules of interest. The first method, Scottish STV, is the method of primary interest because it is used to select the winning committee in elections from our real-world dataset.\\

\textbf{Scottish STV}.  Scottish single transferable vote (Scottish STV) is a multistage election procedure which works as follows. In a given stage, if a candidate has a number of first-place votes in excess of the  quota $\lfloor \frac{V}{k+1}\rfloor +1$, then this candidate is given a seat. Their surplus votes above quota are transferred proportionally to candidates ranked next on this candidate's ballots, where we assume that any candidate who has been previously elected or eliminated no longer appears on any ballot. If no candidate has enough first-place votes to achieve quota, then the candidate with the fewest votes is eliminated from all ballots and their first-place votes are transferred to the candidates ranked second on such ballots. One of the distinguishing features of Scottish STV, in contrast to many other forms of STV, is that once a candidate is elected they cannot receive any future vote transfers. Complete details of the method can be found at \url{https://www.legislation.gov.uk/sdsi/2007/0110714245}.\\

\textbf{Meek STV} \cite{M94}. There are two primary differences between Meek and Scottish STV. Under Meek, candidates who have previously won a seat can receive vote transfers in later rounds, and the quota can decrease as the transfer process unfolds. The exact algorithm is too complex to describe concisely; see \cite{HWW87} for a complete description.\\

\textbf{Expanding Approvals Rule (EAR)} \cite{AL20}. This method, proposed by Aziz and Lee, is designed so that voters can express weak linear orders over the candidates, but easily translates to our setting in which a voter's ordering is strict (with the possible exception that candidates left off the ballot are considered tied for the last ranking). Briefly, EAR can be thought of as a multiwinner extension of the single-winner Bucklin method. At a high level, Aziz and Lee describe their method in these terms: ``An index $j$ is initialised to 1. The voting weight of each voter is initially 1. We use a quota $q$ that is between $V/(k + 1)$ and $V/k$. While $k$ candidates have not been selected, we do the following. We perform $j$-approval voting with respect to the voters’ current voting weights. If there exists a candidate $c$ with approval support at least a quota $q$, we select such a candidate. If there exists no such candidate, we increment $j$ by one and repeat until $k$ candidates have been selected.'' In this paper, we use $q = \frac{V}{k+1}$. Ballots in the Scottish dataset are highly truncated, so for a candidate to be elected, we require that they receive a quota's worth of support only from voters who rank them on their ballot. If $j=k$ and no candidates have sufficient support because of truncated ballots, the candidate with the most support is elected. As with Meek STV, we omit a complete description of the method, and instead refer the reader to \cite{AL20}.\\

\textbf{Chamberlin-Courant Rule (CC)} \cite{Cham-Cour}. There are a number of variants of this rule. In each, voters are ``assigned'' a member of the  winning set. This assignment gives each voter a measure of individual  satisfaction; these measures are combined to create a measure of social satisfaction and the winning committee is defined to be the set of candidates that maximizes this social satisfaction.  The version of CC we consider is the original method proposed in \cite{Cham-Cour} which uses the candidates' Borda score and its sum to determine the level of individual satisfaction and social satisfaction, respectively. Formally, let $r_{ic}$ denote the rank of candidate $c$ on voter $i$'s ballot, so that this voter gives $m-r_{ic}$ points to $c$. For a fixed committee $X$ of size $k$, let $V_c(X)$ denote the set of voters for whom candidate $c$ is the most preferred candidate in the committee $X$. CC selects the committee $X$ of size $k$ which maximizes the value \[\displaystyle\sum_{c \in X}\displaystyle\sum_{i \in V_c(X)} m-r_{ic}.\]

Because CC relies on Borda scores, we must decide how many points a ballot contributes to a candidate subset if none of the candidates in that subset are ranked on the ballot. Following \cite{BFLR}, we use two implementations of CC, an ``optimistic model'' and a ``pessimistic model.'' Baumeister et al. study only single-winner Borda count in \cite{BFLR}, but their models of processing partial ballots and their terminology translate to the multiwinner CC setting. There are other ways of adapting CC to partial ballots, but these two are adequate for our purposes.\\

\textbf{Chamberlin-Courant Rule, Optimistic Model (CC OM)}. Suppose a ballot does not rank any candidates from a candidate subset currently under consideration and the ballots ranks $t$ candidates. Then this ballot gives $m-t-1$ points to the subset.\\

\textbf{Chamberlin-Courant Rule, Pessimistic Model (CC PM)}. If a ballot does not rank any candidates from a candidate subset currently under consideration, then this ballot gives no points to the subset.\\



\subsection{Our Dataset: Scottish Local Government Elections}

For the purposes of local government, Scotland is partitioned into 32 council areas, each of which is governed by a council. The councils provide a range of public services that are typically associated with local governments, such as education, waste management, and road maintenance. A council area is divided into wards, each of which elects a set number of councilors to represent the ward on the council. The number of councilors representing each ward is determined primarily by the ward’s population, but typically a ward has three or four seats. Every five years each ward holds an election where all seats available in the ward are filled using Scottish STV. Every Scottish ward has used STV for local government elections since 2007.

The data was collected for \cite{MGS24}, and is now publicly available at \url{https://github.com/mggg/scot-elex}. The dataset contains 1100 elections, of which 1070 satisfy $k>1$. Most of these elections satisfy $k \in \{3,4\}$ and $m \in \{6,7,8,9\}$; see \cite{MGS24} for more details about the dataset.

Voters are allowed to provide a complete ranking of the candidates, but voluntary truncation is very common. Approximately 14.0\% of ballots in the dataset rank only a single candidate and 58.0\% rank fewer than $k$ candidates. By contrast, only 13.2\% provide a complete ranking, where by ``complete ranking'' we mean a ballot which ranks $m$ or $m-1$ candidates. Because the ballots are so truncated in general, this dataset is a valuable resource for examining fairness criteria involving partial ballots.

\subsection{Losing and Winning Voter Bloc Criteria}

We now define our new criteria, beginning with independence of losing voter blocs.

\begin{definition}\label{def:ILVB}
Let $\mathcal{L}$ be a subset of losing candidates and $\mathcal{B}(\mathcal{L})$ be a set of ballots such that only candidates from $\mathcal{L}$ are ranked. A voting method satisfies \textbf{independence of losing voter blocs} (ILVB) if whenever we remove the ballots $\mathcal{B}(\mathcal{L})$ from the election, creating the modified profile $P'$, then $W(P,k)=W(P',k)$.
\end{definition}

The motivation behind ILVB is that if voters cast partial ballots which rank only losing candidates, these voters achieve no representation and thus removing their ballots should have no effect on the composition of the winning committee. We illustrate a violation of ILVB with a 2012 STV election from the Scottish elections database.

\begin{example}\label{first_example}

The 2012 election in the fifth ward of the East Ayrshire council area unfolded as shown in the top of Table \ref{first_ex}. The table is a ``votes by round'' table, which shows the number of votes controlled by each candidate in a given round of the STV process. We display such tables to summarize the election because the preference profile is usually much too large. Holden is by far the weakest candidate by any reasonable measure of ``weak,'' and is quickly eliminated after Knapp achieves quota. Then Scott is narrowly eliminated by 3.5 votes and the resulting winners are Knapp, Ross, and Todd.

If we remove 20 bullet votes for Holden then the election unfolds as shown in the bottom of Table \ref{first_ex}. Surprisingly, removing a handful of bullet votes for a losing candidate (and, in this case, a very weak losing candidate) causes a change in the winner set. The removal of these ballots lowers the quota by five votes, allowing for four additional votes to transfer to Scott in the second round, and subsequently Ross has the fewest votes in the third round and is eliminated. Those 20 Holden voters, with no expressed preference for Ross or Scott, were pivotal in Ross winning a seat.

\begin{table}
\begin{tabular}{c | c | c|c|c}
\multicolumn{5}{c}{Original Election, quota = 833}\\
\hline
\hline
Candidate & \multicolumn{4}{c}{Votes by Round}\\
\hline
Holden (Con)&135 & 138.7& &\\
Knapp (Lab)& \textbf{1250} & & &\\
Ross  (SNP)& 735 & 759.4 & 789.4 &\textbf{924.4}\\
Scott (Lab)& 417 &743.9&785.9&\\
Todd (SNP)& 791 & 814.4& 822.7 & \textbf{949.7}\\
\hline

\end{tabular}

\vspace{.1 in}

\begin{tabular}{c | c | c|c|c}
\multicolumn{5}{c}{Modified Election, quota = 828}\\
\hline
\hline
Candidate & \multicolumn{4}{c}{Votes by Round}\\
\hline
Holden (Con)&115 & 118.7& &\\
Knapp (Lab)& \textbf{1250} & & &\\
Ross  (SNP)& 735 & 759.6 & 789.6 &\\
Scott (Lab)& 417 &747.8&789.9&\textbf{835.4}\\
Todd (SNP)& 791 & 814.6& 823.0 & \textbf{1474.4}\\
\hline

\end{tabular}
\caption{An example of a violation of ILVB. The top table shows the vote totals in each round for the original 2012 election in the fifth ward of the East Ayrshire council area. The bottom table shows the corresponding vote totals when we remove 20 bullet votes for Holden. A bold number represents when a candidate surpasses quota.}
\label{first_ex}
\end{table}

Analysis of the full preference profile for this election arguably demonstrates additional paradoxical behavior beyond a violation of the loser blocs criterion. The election contains candidates from three parties: Conservative (Con), Labour (Lab), and the Scottish National Party (SNP). Note that it is common knowledge in Scottish politics that the Conservative party is  generally to the right of the other two. Furthermore, of the 135 voters who rank Holden first, 55 rank a Lab candidate second and 27 rank an SNP candidate second, so if we were to arrange the parties on a 1-dimensional axis then the Con candidate would be on the right, the SNP would be on the left, and Lab would be in the center. By removing bullet votes for Holden, we make the overall electorate more left-leaning, yet the winning committee becomes less left-leaning. 

\end{example}

We also note that certain kinds of no-show paradoxes are special cases of violations of the ILVB criterion. Suppose a voter casts a ballot with $A$ ranked first and $B$ ranked second, with no other candidate ranked on the ballot, and both candidates are losers in the original election. If we remove this ballot and $B$ earns a seat in the resulting winning committee, then the election exhibits a no-show paradox (because this voter is better off not voting) as well as a violation of ILVB. Such outcomes have been observed in the Scottish elections database. For example, in the 2022 election in the fifth ward of the City of Edinburgh council area candidate Malcolm Wood does not win a seat, but if we remove two bullet votes for Wood then he does win a seat \cite{MGS24}.

Our next criterion is similar to ILVB but for the removal of ballots ranking only winning candidates, and the motivation behind the criterion is the same. If, for example, $A$ is a winning candidate and we remove some bullet votes for $A$, it seems normatively undesirable that other winning candidates can lose their seats. If anything, those other winning candidates should ``better'' represent their constituencies with the removal of ballots for $A$.

\begin{definition}\label{def:IWVB}
Let $\mathcal{W}$ be a subset of winning candidates, $|\mathcal{W}|<k$, and $\mathcal{B}(\mathcal{W})$ be a set of ballots such that only candidates from $\mathcal{W}$ are ranked. A voting method satisfies \textbf{independence of winning voter blocs} (IWVB) if whenever $A$ is a winning candidate, $A \not \in \mathcal{W}$, and we remove the ballots $\mathcal{B}(\mathcal{W})$ from the election, creating the modified profile $P'$, then $A\in W(P', k)$.
\end{definition}

Notice that we can remove so many ballots that the candidates in $\mathcal W$ lose their seats without violating IWVB. However, we demonstrate below that this is perhaps too loose of a requirement to be useful, particularly for the context of proportional representation since any change in the winner set can have large ramifications for the other winners. Thus we also study a weaker version, which we call IWVB$^*$.

\begin{definition}\label{def:IWVB*}
Let $\mathcal{W}$ be a subset of winning candidates, $|\mathcal{W}|<k$, and $\mathcal{B}(\mathcal{W})$ be a set of ballots such that only candidates from $\mathcal{W}$ are ranked. A voting method satisfies \textbf{IWVB}$^*$ if whenever we remove the ballots in $\mathcal{B}(\mathcal{W})$ (creating a new profile $P'$) and $\mathcal{W}\subseteq W(P',k)$, then $W(P,k)=W(P',k)$.
\end{definition}

That is, if we remove partial ballots which rank only a subset of  winners, $\mathcal W$, and the removal of these ballots does not cause any of the candidates in $\mathcal W$ to lose their seats, then the overall winning committee should not change.

We illustrate a violation of IWVB$^*$ (and hence also a violation of IWVB) with another election from the Scottish elections database.

\begin{example}\label{second_example}
The 2022 election in the eighth ward of the North Ayrshire council area unfolded as shown in the top of Table \ref{second_ex}. There are seven candidates from six different parties: Conservative (Con), Green (Grn), Labour (Lab), Liberal Democrats (LD), Scottish Family (SFP), and the SNP. If we remove 199 bullet votes for Nairn McDonald, the Labour candidate, then the election unfolds as shown in the bottom of Table \ref{second_ex}. We might expect McDonald to lose a seat, but instead removing this unilateral support for McDonald causes Susan Johnson to replace Angela Stephen in the winner set.

\begin{table}
\begin{tabular}{c | c | c|c|c|c|c|c}
\multicolumn{8}{c}{Original Election, quota = 1007}\\
\hline
\hline
Candidate & \multicolumn{7}{c}{Votes by Round}\\
\hline
Burns (SNP)&\textbf{1470} & & & & & &\\
Collins (Grn)& 171 & 219.2 &227.6 & 239.1& 278.9& &\\
Craig (SFP)& 64 & 70.3 & 76.5 &&&&\\
Jackson (LD)& 106 & 111.7 & 130.4 & 139.9 &&&\\
Johnson (SNP)& 323 & 696.5 & 706.3 & 718.6 & 730.3 & 874.0 &\\
McDonald (Lab)& \textbf{1096}&&&&&&\\
Stephen (Con)& 795 & 797.8 & 818.5 & 839.8 & 885.3 & 913.6 & \textbf{1097.1} \\
\hline

\end{tabular}

\vspace{.1 in}

\begin{tabular}{c | c | c|c|c|c|c|c}
\multicolumn{8}{c}{Modified Election, quota = 957}\\
\hline
\hline
Candidate & \multicolumn{7}{c}{Votes by Round}\\
\hline
Burns (SNP)&\textbf{1470} & & & & & &\\
Collins (Grn)& 171 & 217.8 & 228.1 & 253.5 & 256.2 & &\\
Craig (SFP)& 64 & 69.9 &&&&&\\
Jackson (LD)& 106 & 110.5 & 117.5 &&&&\\
Johnson (SNP)& 323 & 727.1 & 738.0 & 745.7 & 747.9 & 885.6 & \textbf{975.8}\\
McDonald (Lab)& 897 & 922.1 & 929.2 & \textbf{970.6} &&&\\
Stephen (Con)& 795 & 797.8 & 818.1 & 841.8 & 846.7 & 871.1 & \\
\hline

\end{tabular}
\caption{An example of a violation of IWVB$^*$. The top table shows the original 2022 election in the eighth ward of the North Ayrshire council area. The bottom table shows the election when we remove 199 bullet votes for McDonald.}
\label{second_ex}
\end{table}

We make two observations about this example. First, with Scottish STV, a violation of a voter bloc criterion can manifest in different ways. In Example \ref{first_example}, the actual election and the modified election unfold in the same sequential order until the final round. That is, removing bullet votes for Holden did not change the order in which the election unfolded in intermediate rounds: in either election Knapp is elected in the first round, Holden is eliminated in the second round, etc.  By contrast, in Example \ref{second_example} the removal of votes for McDonald causes the election to unfold in a different order, so that McDonald is not elected until the fourth round in the modified election. This change in the order of how the election unfolds ultimately leads to a violation of IWVB$^*$. Second, the party dynamics in this election demonstrate why IWVB might be normatively desirable. It seems strange that the SNP can double its representation in the winning committee if voters who care only about one Labour candidate decide not to vote. The voters who cast bullet votes for McDonald are presumably indifferent between all other candidates; why should their votes determine if a Conservative or SNP candidate earns the third seat? If anything, the presence of these left-leaning voters should support the election of a left-leaning candidate from the SNP, but instead these Labour voters are necessary for a Conservative to win the final seat.

\end{example}

When introducing new fairness criteria, our first question should be: do any reasonable voting methods satisfy these criteria? In our case the answer is Yes, since any positional scoring rule such as $k$-Borda or $k$-plurality satisfies both ILVB and IWVB (the reason is that a candidate's score does not decrease when removing ballots on which that candidate is not ranked), but these methods do not aim to achieve proportional representation. As we discuss below, depending on what is meant by ``proportional,'' proportional methods which satisfy ILVB and IWVB$^*$ do exist, but we are unable to find a ranked-choice proportional method which satisfies IWVB. Because positional scoring rules satisfy ILVB and IWVB, these are distinguished from more classical criteria such as Arrow's independence of irrelevant alternatives, which no reasonable method satisfies.

\section{Worst Case Analyses}\label{worst_case}

In this section, we show that for each method except CC we can change the entire winning committee by removing partial ballots with only losing candidates ranked on them. The same is true for every method including CC when we remove partial ballots with only winning candidates ranked on them. Thus, the worst-case for both ILVB and IWVB violations is as bad as possible, with the exception of ILVB with CC. We also show that CC satisfies the IWVB$^*$ criterion, so the worst-case analysis does not apply to CC in this case. Throughout this section, we use a quota of $q=\frac{V}{k+1}$ for convenience. While some of the methods we described in Section \ref{prelim} use slightly different quotas such as $\lfloor \frac{V}{k+1} \rfloor + 1$, the differences are vanishingly small for large $V$ and have no impact on the theoretical results presented here.

The preference profiles used to demonstrate worst-case outcomes are extreme, and unlikely to be observed in practice. In Section \ref{empirical_results}, we explore the outcomes found in real-world elections.

\begin{prop}
(ILVB - STV and EAR) Let $k>0$. For Scottish STV, Meek STV, and EAR, there exists a profile $P$ (which depends on the voting method) and a set of ballots from $P$ ranking only losing candidates such that when we remove these ballots, creating the modified profile $P'$, $W(P,k)\cap W(P',k)=\emptyset$. 
\end{prop}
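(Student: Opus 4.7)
The plan is to exhibit, for each of the three voting methods, an explicit family of profiles (indexed by $k$) together with an explicit set of loser-only ballots whose removal completely swaps the winner set. The shared mechanism is that removing loser-ranking ballots strictly decreases $V$, hence the quota $q = V/(k+1)$, and also removes certain loser-to-loser transfers; with the remaining ballots suitably tuned, this small change in $q$ reroutes the entire elimination cascade.

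For Scottish STV, I would build a single-seat gadget and replicate it $k$ times on pairwise disjoint candidate and voter pools. A plausible gadget is the four-candidate profile with $a$ receiving $40$ bullets, $b$ receiving $25$ bullets, $c$ receiving $20$ ballots of type $c>b$, and a loser $\ell$ receiving $15$ ballots of type $\ell>c$. In the original profile, $\ell$ is the least-popular candidate and is eliminated first; its transfers push $c$ ahead of $b$, so $b$ is eliminated next with no transfers (bullets), and $a$ beats $c$ in the final pairwise round. After removing the $\ell>c$ ballots, the lowered quota and the absence of the $\ell \to c$ boost make $c$ the first eliminated; $c$'s transfers then push $b$ above the new quota, and $b$ wins. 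Taking $k$ disjoint copies, the global quota $q = 100k/(k+1)$ stays above every candidate's running total throughout the count (a short calculation, since the largest value any candidate can reach is $45$), so the election reduces to pure elimination, the last $k$ survivors are $\{a_1,\dots,a_k\}$ in the original and $\{b_1,\dots,b_k\}$ in the modified profile, and the winner sets are disjoint. Because the removed ballots rank only $\ell_i$ and $c_i$, both losers in the original, they satisfy the definition of $\mathcal{B}(\mathcal{L})$.

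For Meek STV, the same gadget should work after one modification: because Meek recomputes the quota each round and elected candidates continue to absorb fractional transfers through their keep factors, I would append a short auxiliary ``sink'' continuation to each ballot so that stray fractional leftovers are absorbed cleanly and never flow between gadgets, isolating the dynamics inside each copy. For EAR the construction is different in flavor, since EAR is a $j$-approval rule rather than a transfer rule: I would choose ballot lengths so that in the original profile the smallest $j$ at which any candidate's $j$-approval tally reaches $q$ is some $j_A$ and the candidates reaching $q$ at that stage are exactly the $A$-candidates, while in the modified profile the drop in $q$ allows the $B$-candidates to reach quota at some strictly earlier stage $j_B<j_A$. After the $B$-candidates are elected, EAR's voter-weight reduction removes enough support from the $A$-candidates that none of them can later meet the quota, producing disjoint winner sets.

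The main technical obstacle will be arranging for \emph{all} $k$ seats to flip, rather than only some of them. My preferred workaround, the independent-gadget construction above, finesses this by reducing the worst case for general $k$ to the worst case for $k=1$; the numerical verification that no candidate ever clears the global quota prematurely is a short inequality in $k$. A secondary nuisance is tie-breaking during eliminations or during $j$-approval ties, which I would sidestep by perturbing all vote counts slightly so that every minimum attained during the count is unique.
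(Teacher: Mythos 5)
Your proposal follows essentially the same strategy as the paper's proof: replicate a small gadget $k$ times on disjoint candidate pools, and remove loser-only ballots so that the initial elimination order (for the STV rules) or the quota (for EAR) changes, cascading to a completely disjoint winner set; your Scottish STV gadget checks out numerically, and the paper's own constructions (a three-candidates-per-copy gadget for STV, with the removed ballots being bullet votes for the loser $B_i$, and an explicit quota-drop profile for EAR) differ only in the particular numbers. One minor remark: your gadget already works verbatim for Meek STV without any ``sink'' continuation, since no candidate is elected before the final stage and hence no surplus transfers or quota recomputations ever come into play.
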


\begin{proof}

Fix $k>0$. For Scottish and Meek STV, consider the profile outlined in Table \ref{tab:scot_stv_losers}. (This profile is inspired by Example 3.5.2.1 in \cite{F12}.) The profile contains $3k$ candidates which we label $A_1,\dots, A_k, B_1, \dots, B_k, C_1, \dots, C_k$. For each $1\leq i \le k$, let there be 7 ballots of the form $A_i>B_i>C_i$, 9 ballots of the form $A_i>C_i>B_i$, 12 ballots of the form $B_i>C_i>A_i$, 13 ballots of the form $C_i>A_i>B_i$, and 2 bullet votes for $B_i$. No candidate achieves quota initially, and under either form of STV the $C_i$ candidates are all eliminated, resulting in the winning committee $\{A_1,\dots, A_k\}$. If we remove the 2 bullet votes for each $B_i$, then the $B_i$ candidates are eliminated, resulting in the winning committee $\{C_1, \dots, C_k\}$.

\begin{table}[h]
    \centering
    \begin{tabular}{c|c|c|c|c|c|c}
        $7$ & $9$ & $12$ & $13$ & $2$ & $7$ & $\dots$ \\
        \hline \hline
        $A_1$ & $A_1$ & $B_1$ & $C_1$ & $B_1$ & $A_2$ & $\dots$ \\
        \hline
        $B_1$ & $C_1$ & $C_1$ & $A_1$ & & $B_2$ & $\dots$ \\
        \hline
        $C_1$ & $B_1$ & $A_1$ & $B_1$ & & $C_2$ & $\dots$
        
    \end{tabular}
    \caption{A profile illustrating a worst-case outcome for Scottish or Meek STV with respect to ILVB. If there are $k$ seats, create $k$ copies of the first five columns, with each copy containing a disjoint set of candidates.}
    \label{tab:scot_stv_losers}
\end{table}

For EAR, consider the profile containing $2k+1$ candidates outlined in Table \ref{tab:exp_app_losers}.  First, when the bullet votes for $C$ are present, the quota is $q = \frac{(10k+8)k+3k}{k+1} = \frac{10k^2 + 11k}{k+1}$. In the first round, the $A_i$ candidates have 8 votes and the $B_i$ candidates have $10k$ votes. Note $q = \frac{10k^2 + 11k}{k+1} > \frac{10k^2 + 10k}{k+1} = 10k$ and therefore no candidate initially achieves quota. In the second round, all $A_i$ candidates have $10k+8$ votes. Because $q = \frac{10k^2 + 11k}{k+1} < \frac{10k^2 + 18k + 8}{k+1} = 10k + 8$, all $A_i$ candidates achieve quota and the winning committee is $\{A_1, \dots, A_k\}$. 

If we remove the $3k$ bullet votes for $C$, then the quota is $q = \frac{(10k+8)k}{k+1} = \frac{10k^2 + 8 k}{k+1} < \frac{10k^2 + 10k}{k+1} = 10k$. Thus, the $B_i$ candidates achieve quota in the first round and the winning committee is $\{B_1, \dots, B_k\}$.

\begin{table}[h]
    \centering
    \begin{tabular}{c|c|c|c|c|c}
        $8$ & $10k$ & $\dots$ & $8$ & $10k$ & $3k$ \\
        \hline \hline
        $A_1$ & $B_1$ & $\dots$ & $A_k$ & $B_k$ & $C$ \\
        \hline
        & $A_1$ &  &  & $A_k$ & 
        
    \end{tabular}
    \caption{An example of an election where the winner set completely changes according to the Expanding Approvals Rule when throwing out the $C$ loser ballots.}
    \label{tab:exp_app_losers}
\end{table}

\end{proof}

While the worst-case ILVB violations for STV and EAR are as extreme as possible, by contrast the Chamberlin Courant rule satisfies the criterion.

\begin{prop}
CC OM and CC PM satisfy ILVB.
\end{prop}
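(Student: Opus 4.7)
The plan is to show, for both CC OM and CC PM, that removing any ballot which ranks only losing candidates decreases the CC score of the original winning committee $W = W(P,k)$ by at least as much as it decreases the score of any alternative committee. Combined with our standing uniqueness assumption on $W$, this forces $W(P',k) = W$.

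For a size-$k$ committee $Y$ write $s_P(Y)$ for its CC score under $P$, and for a single ballot $b$ let $\delta_b(Y)$ denote the contribution of $b$ to $s_P(Y)$. Since
\[
s_{P'}(Y) \;=\; s_P(Y) \;-\; \sum_{b \in \mathcal{B}(\mathcal{L})} \delta_b(Y),
\]
the proof reduces to the ballot-wise inequality
\[
\delta_b(W) \;\le\; \delta_b(Y) \qquad \text{for every } b \in \mathcal{B}(\mathcal{L}) \text{ and every size-}k \text{ committee } Y,
\]
because then $s_{P'}(W) - s_{P'}(Y) \ge s_P(W) - s_P(Y) > 0$ for every $Y \ne W$, so $W$ remains the strict CC-maximizer under $P'$.

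The heart of the argument is verifying this inequality. By hypothesis each $b \in \mathcal{B}(\mathcal{L})$ ranks only candidates from $\mathcal{L}$, which consists of losers, so $b$ ranks no candidate in $W$. Under PM this immediately gives $\delta_b(W) = 0 \le \delta_b(Y)$, since $\delta_b(Y) \ge 0$ in all cases. Under OM, let $t$ be the number of candidates ranked by $b$; then $\delta_b(W) = m - t - 1$, and for any $Y$ of size $k$, either $Y$ contains none of $b$'s ranked candidates, in which case $\delta_b(Y) = m - t - 1 = \delta_b(W)$, or $Y$ contains one of them, sitting at some rank $r \le t$ on $b$, so $\delta_b(Y) = m - r \ge m - t > m - t - 1 = \delta_b(W)$.

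There is no real obstacle. The only point requiring care is parsing the OM convention so as to see that a committee intersecting the ranked portion of $b$ earns strictly more than the ``optimistic'' default $m - t - 1$, which is exactly what makes the inequality go through even when $Y$ happens to contain several of the losers in $\mathcal{L}$. The PM case is essentially a one-liner once this framework is in place.
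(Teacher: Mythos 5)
Your proof is correct and follows essentially the same route as the paper's: both arguments show that removing a ballot ranking only losers subtracts at least as many points from any competing committee as from $W(P,k)$ (the same amount if the committee contains none of the ranked losers, strictly more if it does), so $W(P,k)$ remains the unique maximizer. Your ballot-wise formulation of the inequality $\delta_b(W)\le\delta_b(Y)$ is just a more explicit rendering of the paper's set-level statement.
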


\begin{proof}
Let $W(P,k)$ be the winner set under either version of CC. Let $\mathcal{B}(\mathcal{L})$ be a set of ballots which rank only losing candidates. If we remove the ballots from $\mathcal{B}(\mathcal{L})$, then the CC score for the set $W(P,k)$ may decrease (in the OM model, e.g.) but any set of candidates not containing one of the losing candidates from $\mathcal{B}(\mathcal{L})$ will lose the same amount of points as $W(P,k)$. By definition of the CC rule, any set of candidates containing one of the losing candidates ranked on the ballots in $\mathcal{B}(\mathcal{L})$ will lose even more points than $W(P,k)$. Therefore, in the modified profile with the ballots in $\mathcal{B}(\mathcal{L})$ removed, the set $W(P,k)$ still has the maximal CC score and remains the winner set.
\end{proof}

We note that any CC rule which uses a reasonable model to process partial ballots also satisfies ILVB.

We now prove an analogous proposition for IWVB. In this case, CC also produces worst-case outcomes.

\begin{prop}
(IWVB) Let $k>0$. For Scottish STV, Meek STV, EAR, and both versions of CC, there exists a profile $P$ (which depends on the voting method) and a set of ballots from $P$ which rank only a single winning candidate such that when we remove these ballots, creating the modified profile $P'$, $W(P,k)\cap W(P',k)=\emptyset$. 
\end{prop}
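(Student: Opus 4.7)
The plan is to exhibit, for each of the five voting methods, a profile $P$ whose winner set is $\{A_1,\ldots,A_k\}$ and a collection of bullet votes for the $A_i$'s whose removal swings the election to the disjoint set $\{B_1,\ldots,B_k\}$. A single template works for all methods: for each $k$, partition $2k$ candidates into pairs $(A_i,B_i)$ and for each pair cast $p$ bullet votes for $A_i$ together with $b$ ballots of the form $B_i > A_i$, with $p$ chosen slightly larger than $b$ (e.g.\ $p = 100$, $b = 80$). The ballots to be removed are all $pk$ bullet votes for the $A_i$'s.

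For Scottish and Meek STV in $P$, the quota is approximately $p+b$; no candidate meets it in round~1, each $B_i$ (being the plurality loser) is eliminated, and the $b$ votes of each $B_i > A_i$ ballot transfer to $A_i$, bringing each $A_i$ up to $p+b$ votes and electing them. After the bullets are removed, the quota falls to about $b$, and each $B_i$ meets it directly in round~1. The two STV variants agree here because no surplus ever transfers past an exhausting preference (bullets carry no second choice and the $B_i > A_i$ ballots carry no third). For EAR in $P$ with $k\ge 2$, neither $A_i$'s approval $p$ nor $B_i$'s approval $b$ meets the quota $\tfrac{(p+b)k}{k+1}$ at $j=1$; at $j=2$, $A_i$'s approval grows to $p+b$ and each $A_i$ is selected (for $k=1$, $A_1$ already meets the quota at $j=1$). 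In $P'$ the quota falls to $\tfrac{bk}{k+1}<b$, so each $B_i$ meets it at $j=1$. For CC in both OM and PM, the committee score decomposes as a sum of independent per-pair contributions, so one may optimize pair-by-pair; a direct computation shows that $A_i$ dominates $B_i$ per pair with the bullets in place (the margin being $p-b$ for OM and $(2k-1)p-b$ for PM), while $B_i$ dominates $A_i$ once the bullets are removed.

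The main obstacle is ruling out committees that deviate from the one-per-pair pattern, in particular those that assign two seats to a single pair and skip another. For CC this is settled by observing that skipping a pair loses at least $(2k-1)p+(2k-2)b$ points in PM (or $p+b$ in OM), while the gain from doubling up elsewhere is only $b$ Borda points, so the trade is strictly bad. For STV at small $k$, one may have $A_i$ reaching quota in round~1 directly with the bullets (or $B_i$ doing so in the modified profile), but the winner sets are still $\{A_1,\ldots,A_k\}$ and $\{B_1,\ldots,B_k\}$ respectively. Throughout, we adopt the quota convention $q=V/(k+1)$ used in the statement of the proposition.
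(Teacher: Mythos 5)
There is a genuine gap, and it is in the very first line of your construction: the proposition requires that the removed ballots ``rank only a single winning candidate,'' i.e., every removed ballot is a bullet vote for one and the same winner (the paper's constructions remove the $14k-12$ bullet votes for $A_1$ alone, the bullet votes for the single candidate $A$ alone, and the $3$ bullet votes for $A_1$ alone, respectively). You instead remove all $pk$ bullet votes, spread across all $k$ winners $A_1,\dots,A_k$. That is not an instance of the hypothesis, and it also fails to witness an IWVB violation: Definition \ref{def:IWVB} requires $|\mathcal{W}|<k$ and only constrains winners $A\notin\mathcal{W}$, so taking $\mathcal{W}$ to be the entire winner set proves nothing. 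Within your own framework the mechanics are fine (your pairwise profile does flip from the $A_i$'s to the $B_i$'s when all bullets are removed), but the construction answers an easier question than the one posed.

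The entire difficulty of the proposition -- and the reason it demonstrates a worst-case IWVB failure -- is that deleting ballots which mention only \emph{one} winner must cascade to unseat the other $k-1$ winners, none of whom appear on any removed ballot. Your per-pair decomposition cannot produce this, because the pairs are completely decoupled: removing bullet votes for $A_1$ in your profile leaves every other pair's subelection untouched, so $A_2,\dots,A_k$ keep their seats. The paper engineers the cascade differently for each method: for STV, the $A_1$ bullet votes keep $A_1$ at quota so that $B_1$ is eliminated first and its transfers prop up $A_2,\dots,A_k$, whereas without the bullets $A_1$ is eliminated first and its transfers prop up $B_1,\dots,B_k$; for EAR, the $A$ bullet votes inflate the quota so that the $C_i$ candidates cannot reach it and the $B_i$ candidates win on second-place support; for CC, the $3$ bullet votes for $A_1$ shift the global score comparison between the two candidate committees $\mathcal{A}$ and $\mathcal{B}$, which are linked through shared ballots rather than split into independent pairs. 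You would need to redo the construction so that a single winner's bullet votes carry all of this weight.
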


\begin{proof}

Fix $k>0$. For Scottish and Meek STV, consider the profile with $2k$ candidates outlined in Table \ref{tab:scot_stv_winners}. The quota is $20k - 12$, which also equals the number of first-place votes for $A_1$. No other candidate achieves quota initially, and thus $A_1$ earns the first seat but has no surplus to transfer. $B_1$ has the fewest first place votes and is eliminated, transferring two votes to $A_i$ for $2\le i \le k$. As a result, each of the remaining $B_i$ candidates have fewer votes than any of the $A_j$ candidates, all $B_i$ candidates are eliminated, and the winning committee is $\{A_1, \dots, A_k\}$.

However, if we remove the $14k-12$ bullet votes for $A_1$, then $A_1$ has only $6k$ votes and is eliminated in the first round. Each $B_i$ candidate receives a positive vote transfer from $A_1$ while the remaining $A_i$ candidates do not. Thus each of the $A_i$ candidates are eliminated and the winning committee is $\{B_1, \dots, B_k\}$.

\begin{table}[h]
    \centering
    \begin{tabular}{c|c|c|c|c|c|c|c|c|c|c|c|c|c}
        $14k - 12$ & $4k + 2$ & $2$ & $\dots$ & $2$ & $6k+2$ & $2$ & $\dots$ & $2$ & $10k$ & $10k$ & $\dots$ & $10k$ & $10k$ \\
        \hline \hline
        $A_1$ & $A_1$ & $A_1$ & $\dots$ & $A_1$ & $B_1$ & $B_1$ & $\dots$ & $B_1$ & $A_2$ & $B_2$ & $\dots$ & $A_k$ & $B_k$ \\
        \hline
         & $B_1$ & $B_2$ & $\dots$ & $B_k$ & & $A_2$ & $\dots$ & $A_k$ & & & & & 
        
    \end{tabular}
    \caption{An example of a profile where the winner set completely changes when throwing out the $A_1$ bullet votes using Scottish STV or Meek STV.}
    \label{tab:scot_stv_winners}
\end{table}

For EAR, consider the profile in Table \ref{tab:exp_app_winners}. The quota is $q=20k+10$, so candidate $A$ is the only candidate to earn quota initially and is given the first seat. Furthermore, because quota is so large, no other candidates make quota, and all candidates get support from any ballot that they appear on. When considering the second place votes, the $B_i$ candidates have $10 + 20k$ votes and make quota. The $B_i$ candidates win the remaining $k-1$ seats and the winning committee is $\{A,B_1, \dots, B_{k-1}\}$.

If we remove the bullet votes for $A$, the quota is $q = \frac{10(2k+1)(k-1)}{k+1}$, which is less than $20k$. Thus, the $C_i$ candidates win the $k$ seats and the winning committee changes to $\{C_1, \dots, C_{k}\}$.

\begin{table}[h]
    \centering
    \begin{tabular}{c|c|c|c|c|c|c|c}
        $20k+20$ & $10$ & $\dots$ & $10$ & $20k$ & $\dots$ & $20k$ & $20k$ \\
        \hline \hline
        $A$ & $B_1$ & $\dots$ & $B_{k-1}$ & $C_1$ & $\dots$ & $C_{k-1}$ & $C_k$ \\
        \hline
        & & & & $B_1$ & $\dots$ & $B_{k-1}$ &
        
    \end{tabular}
    \caption{An example of a profile where the winner set completely changes under EAR when throwing out the $A$ bullet ballots.}
    \label{tab:exp_app_winners}
\end{table}

For CC, consider the profile in Table \ref{tab:cpo_stv_winners}.  For $1<i<k$, there are two ballots of the form $A_i>B_i$, two of the form $B_i>A_i$, two of the form $A_i>B_{i+1}$, and two of the form $B_{i+1}>A_i$. For $i \in \{1,k\}$ the ballots are as shown in the table. Let $\mathcal{A}=\{A_1, \dots, A_k\}$ and $\mathcal{B}=\{B_1, \dots, B_k\}$. For CC OM, the score for $\mathcal{A}$ is \[(2k-1)(5+4(k-1)) + (2k-2)(4k) = 16k^2-10k-1\]  and the score for $\mathcal{B}$ is \[(2k-1)(4k) + (2k-2)(3+2+4(k-1)) = 16k^2 - 10k - 2.\] Thus, $\mathcal{A}$ will have more CC OM points than $\mathcal{B}$ in this case. In the pessimistic model, the score for $\mathcal{B}$ goes down by $3(2k-2)$ to $16k^2 - 16k + 4$ so $\mathcal{A}$ still beats $\mathcal{B}$ in that situation.  Since $\mathcal{A}$ has the property that every voter has a candidate in the subset ranked in their top two rankings, swapping out any $A_i$ with a $B_j$ will result in a lower CC score.  Specifically, exchanging an $A_i$ with a $B_j$, will result in some columns of votes that have $A_i$ now receiving no points (or only $2k-3$ points in OM), leading to a reduced CC score.  Exchanging any $B_i$ for $A_j$ in $\mathcal{B}$ similarly results in a reduced CC score.  It follows that $\mathcal{A}$ wins both CC OM and CC PM for the profile in Table \ref{tab:cpo_stv_winners}.

Now consider the profile in Table \ref{tab:cpo_stv_winners} with the first column (3 bullet votes for $A_1$) removed, and call this profile $P'$.  For $P'$, the score for $\mathcal{A}$ is $(2k-1)(2+4(k-1)) + (2k-2)(4k) = 16k^2-16k+2$ and the score for $\mathcal{B}$ is $16k^2 - 16k + 4$ (equal to the PM score with bullet votes), so $\mathcal{B}$ has more points than $\mathcal{A}$.  As with the original profile, swapping out some candidates from $\mathcal{A}$ or $\mathcal{B}$ will similarly result in lower scores for the modified subsets, thus $\mathcal{B}$ wins both CC OM and CC PM for profile $P'$. 

\begin{table}[h]
    \centering
    \begin{tabular}{c|c|c|c|c|c|c|c|c|c|c|c|c|c|c}
        $3$ & $1$ & $2$ & $1$ & $2$ & & $2$ & $2$ & $2$ & $2$ & & $2$ & $2$ & $2$ & $2$ \\
        \hline \hline
        $A_1$ & $A_1$ & $B_1$ & $A_1$ & $B_2$ & & $A_i$ & $B_i$ & $A_i$ & $B_{i+1}$ & & $A_k$ & $B_k$ & $A_k$ & $B_1$ \\
        \hline
        & $B_1$ & $A_1$ & $B_2$ & $A_1$ & & $B_i$ & $A_i$ & $B_{i+1}$ & $A_i$ & & $B_k$ & $A_k$ & $B_1$ & $A_k$
        
    \end{tabular}
    \caption{An example of a profile where the winner set completely changes according to both CC rules (OM and PM) when throwing out winner ballots. The middle block of ballots are copied for $i = 2, 3, \dots k-1$.}
    \label{tab:cpo_stv_winners}
\end{table}
\end{proof}

To conclude this section, we provide the worst-case analysis for violations of IWVB$^*$ for the STV and EAR methods, and show that both forms of CC satisfy the criterion.

\begin{prop}
(IWVB$^*$ - STV and EAR) Let $k>0$. For Scottish STV, Meek STV, and EAR, there exists a profile $P$ (which depends on the voting method) and a set of ballots from $P$ which rank only a single winning candidate $A$ such that when we remove these ballots, creating the modified profile $P'$, $W(P,k) \cap W(P', k) = \{A\}$.
\end{prop}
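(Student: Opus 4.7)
The plan is to modify the constructions used in the IWVB proposition so that the distinguished candidate $A$ whose bullet votes are removed stays in the winner set, while every other winner is displaced. Since the difficulty in the IWVB proof was already flipping the non-$A$ winners using the quota change triggered by removing ballots, the bulk of the work is to inflate $A$'s first-place support enough to guarantee $A$ is elected in both $P$ and $P'$, without disturbing the flip mechanism for the remaining seats.

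For Scottish STV and Meek STV, I would adapt the profile in Table~\ref{tab:scot_stv_winners} by renaming $A_1 = A$ and boosting the bullet-vote block for $A$ from $14k-12$ to a much larger value, chosen so that $A$ exceeds the quota both in $P$ and in $P'$. Because $A$'s extra ballots are bullet votes, they do not transfer and so they do not affect the ordering of the other $A_i$ and $B_i$ candidates in later rounds beyond changing the quota. The remaining ballots keep the same structure as in the IWVB construction, so the $B_i$-vs-$A_i$ competition for the other $k-1$ seats is still decided by how much of $A$'s (now small) surplus transfers to the $B_i$'s. I would calibrate the bullet-vote count so that with the original quota the surplus is too small to save the $B_i$'s (giving seats to $A_2,\ldots,A_k$), while with the reduced post-removal quota the larger surplus tips the elimination order and the $B_i$'s survive instead. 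For Meek STV, one must additionally track the dynamic quota and the keep-factor for $A$, which may require a few filler ballots to keep $A$'s leftover weight from leaking back into the competition in a way that defeats the flip.

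For EAR, I would likewise adapt Table~\ref{tab:exp_app_winners}: replace $A$'s block so that its size is split into a large ``permanent'' approval core that meets the (smaller) post-removal quota at rank $1$ and a ``removable'' bullet block that does not. The $B_i$'s continue to appear only at rank $2$ on the $C_i$-first ballots, while the $C_i$'s provide rank-$1$ support sufficient to hit the post-removal quota but insufficient to hit the original quota. Then, in $P$, $A$ wins at rank $1$, the $C_i$'s fall short, and once EAR advances to rank $2$ the $B_i$'s achieve quota; in $P'$, $A$ still wins at rank $1$, but the $C_i$'s now meet the lowered quota and capture the remaining $k-1$ seats before EAR ever advances. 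Thus $W(P,k) = \{A, B_1,\ldots,B_{k-1}\}$ while $W(P',k) = \{A, C_1,\ldots,C_{k-1}\}$, so the intersection is exactly $\{A\}$.

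The main obstacle will be the arithmetic balancing act: the removed bullet votes simultaneously lower the quota and leave $A$'s non-bullet surplus unchanged, so for STV the transfer amounts and for EAR the approval totals must sit in a narrow window where the original quota blocks the ``wrong'' winners but the smaller post-removal quota lets them in, all while $A$ clears quota on both sides. Once the bullet-block count is chosen to satisfy these simultaneous inequalities, verifying that the elimination/selection order produces the claimed winner sets is mechanical, following the same round-by-round bookkeeping used in the IWVB case.
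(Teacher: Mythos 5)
Your high-level strategy matches the paper's: keep $A$ above quota in both $P$ and $P'$ and use the change in $A$'s surplus transfer (for STV) or in the quota (for EAR) to flip the remaining $k-1$ seats. But the key quantitative step for STV is stated backwards. You claim that removing the bullet votes lowers the quota and therefore yields a \emph{larger} surplus, which then saves the $B_i$'s. In fact, removing $N$ bullet votes for $A$ lowers $A$'s tally by $N$ but lowers the quota only by $N/(k+1)$, so the surplus shrinks by $Nk/(k+1)$; and since both Scottish and Meek STV distribute the surplus proportionally across \emph{all} of $A$'s ballots (bullet votes included, at a non-transferable reduced value), the amount reaching each $B_i$ is $b\bigl(1-\tfrac{q}{T}\bigr)$ with $T$ equal to $A$'s full tally --- a quantity that is \emph{increasing} in the number of bullet votes. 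The paper's construction (Table \ref{tab:scot_stv_winners_alt}) exploits exactly this: with the bullet block present each $B_i$ receives nearly $\tfrac{kb}{k+1}>c$ and the $C_i$'s are eliminated; after removal each $B_i$ receives only $\tfrac{kb-c}{k+1}<c$ and is eliminated instead. So the ``narrow window'' you describe is empty in the direction you aim for; you would need to swap the roles of the two regimes. Your phrase ``leave $A$'s non-bullet surplus unchanged'' suggests you are picturing the bullet votes as being consumed to fill the quota with only the transferable ballots forming the surplus; neither method works that way.

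The EAR sketch has an internal tension as well: you want the $C_i$'s to fall short of the original quota at rank $1$ while the $B_i$'s \emph{reach} that same (higher) quota at rank $2$, yet the only rank-$2$ support you give the $B_i$'s comes from $C_i$-first ballots, each of which also counts toward $C_i$ in the $j=2$ approval round; so a $B_i$'s $j=2$ total cannot exceed the corresponding $C_i$'s without additional support that your sketch does not supply. The paper's mechanism is different: with the bullet block present the quota is so large that no candidate other than $A$ ever reaches it, and the $B_i$'s win via EAR's ``most support'' fallback (their rank-$3$ support on $A$'s reweighted ballots puts them $10$ votes ahead of the $C_i$'s); after removal the $C_i$'s clear the reduced quota at $j=2$ before the $B_i$'s rank-$3$ support is ever counted. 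Both of your constructions are repairable, but as written the mechanisms that are supposed to produce the flip do not operate in the direction you claim.
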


\begin{proof}
We first give the proof in full generality, then give an illustrative example at the end. Fix $k>0$. For Scottish and Meek STV, consider the profile $P$ in Table \ref{tab:scot_stv_winners_alt}.

\begin{table}[h]
    \centering
    \begin{tabular}{c|c|c|c|c|c|c}
        $a$ & $b$ & $\dots$ & $b$ & $c$ & $\dots$ & $c$ \\
        \hline \hline
        $A$ & $A$ & $\dots$ & $A$ & $C_1$ & $\dots$ & $C_{k-1}$ \\
        \hline
         & $B_1$ & $\dots$ & $B_{k-1}$ & &  & 
        
    \end{tabular}
    \caption{A generic set of ballots that demonstrate the worst-case scenario for STV. $A$ is always a winner. For sufficiently large $a$, when we throw out the $A$ bullet votes, the rest of the winner set will change from $B$s to $C$s if $\frac{kb}{k+1}>c>\frac{kb-c}{k+1}$.}
    \label{tab:scot_stv_winners_alt}
\end{table}

For large enough $b$ (relative to $c$), $A$ will clearly win the first seat. After $A$ is elected, either the $B_i$ candidates have more votes and the $C_i$ candidates are removed, or vice versa, depending on the relative values of $a$, $b$ and $c$. We must determine how many votes each $B_i$ candidate receives. The quota is $\frac{a + (k-1)(b+c)}{k+1}$, so each $B_i$ candidate receives 

\begin{equation*}
\frac{a+(k-1)b - \frac{a + (k-1)(b+c)}{k+1}}{a+(k-1)b}b.
\end{equation*}

When $a=0$, the $B_i$ candidates will receive the fewest votes possible with $\frac{kb-c}{k+1}$ votes. On the other hand, if we take the limit as $a \to \infty$, the share of $B$ votes approaches $\frac{kb}{k+1}$.

Therefore, if $\frac{kb}{k+1} > c > \frac{kb-c}{k+1}$, there is a value of $a$ where $A$ transfers enough ballots to the $B_i$ candidates that they also win seats, but if the $a$ ballots are removed, $A$ consumes almost all the votes and the $C_i$ candidates win. One example that works for Scottish and Meek STV is in Table \ref{tab:scot_stv_winners_alt_example} with $k = 3$, $a = 1000$, $b = 20$, and $c = 13$. 

\begin{table}[h]
    \centering
    \begin{tabular}{c|c|c|c|c}
        1000 & 20  & 20 & 13 & 13 \\
        \hline \hline
        $A$ & $A$ & $A$ & $C_1$ & $C_2$ \\
        \hline
         & $B_1$ & $B_{2}$ & & 
        
    \end{tabular}
    \caption{An election demonstrating a worst-case scenario for both versions of STV. When  the 1000 bullet votes for $A$ are removed, the winner set changes from $\{A, B_1, B_2\}$ to $\{A, C_1, C_2\}$.}
    \label{tab:scot_stv_winners_alt_example}
\end{table}

\begin{table}[h]
    \centering
    \begin{tabular}{c|c|c|c|c|c|c|c|c|c}
        $a$ & $10k$ & $\dots$ & $10k$ & $10$ & $\dots$ & $10$ & $10$ & $\dots$ & $10$\\
        \hline \hline
        $A$ & $A$ & $\dots$ & $A$ & $B_1$ & $\dots$ & $B_{k-1}$ & $C_1$ & $\dots$ & $C_{k-1}$ \\
        \hline
        & $C_1$ & $\dots$ & $C_{k-1}$ & & & & $B_1$ & $\dots$ & $B_{k-1}$ \\
        \hline
        & $B_1$ & $\dots$ & $B_{k-1}$ & & & & & &
    \end{tabular}
    \caption{An example of a set of ballots. $A$ is always a winner. For sufficiently large $a$, when we throw out the $A$ bullet votes, the rest of the winner set will change from $C$s to $B$s}
    \label{tab:exp_app_winners_alt}
\end{table}

For EAR, consider Table \ref{tab:exp_app_winners_alt}. When there are no bullet votes for $A$ ($a = 0$), quota is $q = \frac{(10k+20)(k-1)}{k+1} = 10\frac{(k+2)(k-1)}{k+1}$. $A$ makes quota, so $A$ wins a seat and the ballots with $A$ ranked first have their weights adjusted from 1 to $\epsilon = \frac{10k(k-1) - q}{10k(k-1)} = \frac{k^2 - 2}{k(k+1)}$. No other candidates make quota so the second place votes are considered. The $B_i$ candidates have $20$ votes, which is short of quota, but the $C_i$ candidates have $10 + 10 \epsilon k$. $10 + 10 \epsilon k = 10 + 10 \frac{k^2 - 2}{k+1} = 10\frac{k + 1 + k^2 - 2}{k+1} = 10\frac{k^2 + k - 1}{k+1} > 10 \frac{(k+2)(k-1)}{k+1} = q$, so the $C_i$ candidates all make quota and are elected, filling all $k$ seats.

Now consider what happens as $a$ approaches infinity. As $a\to \infty$, $q\to \infty$, but $A$ still makes quota in the first round.  Once $A$ is removed and votes transferred, though, the ballots with $A$ ranked first have weights that approach $\frac{k}{k+1}$, and no other candidates ever make quota. After considering the second and then third place votes, the $B_i$ candidates have $20+10k\frac{k}{k+1}$ votes, which is greater than the $10 + 10k\frac{k}{k+1}$ votes the $C_i$ candidates have. Therefore, the $B$ candidates win seats. \end{proof}

\begin{prop}
CC OM and CC PM satisfy IWVB$^*$.
\end{prop}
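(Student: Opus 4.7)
The plan is to compare CC scores of candidate committees before and after removing $\mathcal{B}(\mathcal{W})$, exploiting the fact that the contribution of these removed ballots to any committee $X$ depends only on the intersection $X \cap \mathcal{W}$. Write $s_P(X)$ and $s_{P'}(X)$ for the CC score of $X$ under $P$ and $P'$ respectively, and set $\Delta(X) := s_P(X) - s_{P'}(X)$. Since $P'$ is obtained from $P$ by deleting $\mathcal{B}(\mathcal{W})$, $\Delta(X)$ is the total CC score that the removed ballots contributed to $X$.

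The key structural observation is that $\Delta(X)$ is a function of $X \cap \mathcal{W}$ alone. Fix any ballot $b \in \mathcal{B}(\mathcal{W})$: by definition the only candidates ranked on $b$ lie in $\mathcal{W}$. If some candidate from $X$ is ranked on $b$, then the most preferred such candidate must lie in $X \cap \mathcal{W}$, and its Borda contribution $m - r_{b,\cdot}$ depends only on $b$ and on $X \cap \mathcal{W}$. If no candidate from $X$ appears on $b$, then the contribution is $m - t_b - 1$ (under OM) or $0$ (under PM), and this fallback depends only on $b$ together with the fact that $X \cap \mathcal{W}$ contains none of the candidates ranked by $b$ --- again a condition on $X \cap \mathcal{W}$. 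Summing over $b \in \mathcal{B}(\mathcal{W})$ gives the claim.

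Apply this to $X = W(P,k)$ and $X = W(P',k)$. Because $\mathcal{W} \subseteq W(P,k)$ (by choice of $\mathcal{W}$) and $\mathcal{W} \subseteq W(P',k)$ (by the hypothesis of IWVB$^*$), both committees intersect $\mathcal{W}$ in all of $\mathcal{W}$, so $\Delta(W(P,k)) = \Delta(W(P',k))$. Combining this with the optimality inequalities
\[ s_P(W(P,k)) \geq s_P(W(P',k)) \quad \text{and} \quad s_{P'}(W(P',k)) \geq s_{P'}(W(P,k)), \]
the common additive shift $\Delta$ cancels and I conclude $s_{P'}(W(P,k)) = s_{P'}(W(P',k))$. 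The paper's standing uniqueness assumption on winner sets then forces $W(P,k) = W(P',k)$, which is exactly IWVB$^*$.

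The only delicate step is the key observation, and it is delicate only in the bookkeeping: one must handle OM and PM uniformly, and verify that the ``no ranked candidate from $X$ is on $b$'' fallback term really is invariant under enlarging or shrinking $X$ outside of $\mathcal{W}$. Once that is in place, the argument reduces to a routine ``common shift'' optimization argument that does not rely on any further feature of the CC rule.
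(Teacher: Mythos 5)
Your proposal is correct and follows essentially the same route as the paper's proof: the paper likewise observes that deleting $\mathcal{B}(\mathcal{W})$ subtracts the same number of points from $W(P,k)$ and $W(P',k)$ because both contain $\mathcal{W}$, and then concludes from optimality (the paper phrases it as a contradiction with strict inequality, you as a cancellation of a common shift plus the uniqueness assumption). Your write-up is merely more explicit about why the removed ballots' contribution depends only on $X\cap\mathcal{W}$, which the paper asserts without elaboration.
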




\begin{proof}
    Let $W$ be the winner set under a given version of CC with the ballot profile $P$. Let $\mathcal{B}$ be a set of ballots which rank only candidates in $\mathcal W \subset W$. We claim that if we remove the ballots in $\mathcal B$ from $P$, creating a modified profile $P'$ which has the (potentially different) winner set $W'$ with $\mathcal W \subset W'$, then $W = W'$.

    Suppose not, so $W \neq W'$. Because $W$ is the winning committee under $P$, the CC score for $W$ is greater than the score for $W'$ under $P$. When we remove the ballots in $\mathcal B$, this subtracts the same number of points from both $W$ and $W'$ because $\mathcal W$ is a subset of both $W$ and $W'$. Therefore, under $P'$, the score for $W$ must still be greater than the score for $W'$. Therefore, $W'$ cannot be the winning committee under $P'$, a contradiction, so we conclude that $W = W'$.
\end{proof}

\section{Empirical Results}\label{empirical_results}

In this section we provide empirical results using the Scottish elections dataset. Before presenting the results, we describe the methodology used to find violations of the criteria.

\subsection{Methodology for finding violations}

As the number of ballots cast in an election grows, the number of different sets of ballots that could be removed to potentially cause a violation grows exponentially, and thus any kind of brute force algorithm is not feasible. Therefore, we use a heuristic search that removes only a small subset of all possible ballot combinations to test for violations of our voter bloc criteria. The procedure for searching for losing voter bloc violations is slightly different than winning voter bloc violations, but they both begin by finding the winner set $W$ and the loser set $L$ of the full election. We give high level descriptions of these algorithms below. The code we use, along with all violations found under each method (including a record of the ballots removed) can be found at https://github.com/MattJonesMath/Irrelevant\_Voters\_Project. We note that some of the code we use for the EAR method was provided by Jannik Peters.

Because we use heuristic algorithms, we cannot guarantee we found all criteria violations for any of the voting methods. With that said, we implemented code at different levels of granularity and stopped when we found only marginal changes in the number of violations.  Due to this, we believe we have found almost all of the ILVB, ILWB, and ILWB$^*$ violations present in the Scottish data. Ideally we could find necessary and sufficient conditions that a preference profile must satisfy to produce a violation for a given method, and our code could simply check each preference profile against such conditions. Given the challenge of finding such conditions for various fairness criteria in the three-candidate single-winner case \cite{F19, GS24, Mi17} and the absence of any known conditions in the four-candidate case, we doubt such conditions are forthcoming in our setting.

\subsubsection{ILVB Violation Search}

Select a candidate $B \in L$. We try to remove loser ballots so that $B$ wins a seat. It is very unlikely (although not impossible) that removing a ballot that ranks $B$ will help $B$ win a seat, so identify $\mathcal B$, the set of all ballots that rank only losing candidates that are not $B$.

Removing loser ballots has no impact on the votes that winning candidates receive. Instead, removing these ballots can influence the winner set by changing the quota. We want to test many different quotas, because violations are not monotone with respect to the removal of ballots. That is, it is possible that removing some fraction of the ballots in $\mathcal B$ causes a violation but if we remove all the ballots in $\mathcal B$, we do not observe a violation. Therefore, we try removing $\sigma_\ell$ different fractions ($\frac{1}{\sigma_\ell}, \frac{2}{\sigma_\ell}, \dots, 1)$ of $\mathcal B$, each time rerunning the election to see if $B$ has won a seat. For neatness, if this requires removing a fractional ballot, we round down the number of ballots removed to the next largest integer.

Repeat this process for all $B \in L$.

\subsubsection{IWVB and IWVB$^*$ Violation Search}

Select a pair of candidates $A \in W$ and $B \in L$. We try to remove winner ballots so $B$ takes $A$'s seat. Because we are removing winner ballots now, these ballots can influence the election by changing the quota, but also by preventing surplus votes from being transferred from one candidate to another. Therefore, in this search, we consider how many votes each winning candidate who is not $A$ may send to $A$ and $B$ if they are elected. For each candidate $C_* \in W \setminus \{A\}$, count the number of ballots that rank $C_*$ above both $A$ and $B$ but rank $A$ above $B$. These are ballots that could potentially be transferred to $A$ if $C_*$ wins a seat. Then also count the number of ballots that rank $C_*$ above both $A$ and $B$ but rank $B$ above $A$, which are ballots that could potentially transfer to $B$. Rank the candidates in $W \setminus \{A\}$ by the difference of these two counts so we have $C_1$, $C_2$, $\dots$, $C_{k-1}$. $C_1$ will have many votes that could transfer to $A$ and $C_{k-1}$ will have many votes that could transfer to $B$.

We begin by taking $\mathcal B$ to be only bullet votes for $C_1$. Like above, we want to try many different quota values, so remove $\sigma_w$ different fractions of $\mathcal B$ and try running the election.

Continue by letting $\mathcal B$ be the ballots that only rank $C_1$ and $C_2$, then $C_1$ through $C_3$, and so on until $\mathcal B$ is all ballots that rank all the winning candidates except $A$, each time removing $\sigma_w$ fractions of $\mathcal B$. 

Repeat this process for all pairs of $A \in W$ and $B \in L$.

When searching for violations for the alternate definition IWVB$^*$ (Def \ref{def:IWVB*}), we also confirm that all candidates ranked by a ballot in $\mathcal B$ still win a seat.

\subsubsection{Party Dynamics}

When searching for violations that result in a change in party control, we modify the above algorithms as follows. Select a pair of candidates $A \in W$ and $B \in L$ (even for the ILVB search). When considering which ballots to remove, restrict to only ballots that do not rank any candidates from the parties of $A$ and $B$. If a potential violation has been found where $A$ loses their seat and $B$ gains a seat, confirm that $A$'s party has one fewer seat than before, and $B$'s party has one additional seat. This is to avoid the case where many winning ballots are removed and many seats change winners, which can make it appear that a seat has changed parties even if it has not.

\subsection{Violations in Scottish Elections}

We searched for violations in the Scottish election data with the five  methods from Section \ref{prelim}. We found 576 different violations (not counting the IWVB$^*$ violations, which are special cases of IWVB violations) in 325 elections. Our algorithms could find no violations under any of our voting methods in 745 elections ($\approx 70\%$). CC OM was least susceptible to violations, followed closely by CC PM. As expected, we found no ILVB or IWVB$^*$ violations in the Scottish data using a CC rule. Meek STV is the sequential process with the fewest violations, although Scottish STV has slightly fewer IWVB$^*$ violations. The expanding approvals rule had the most of every type of violation. The full results for $\sigma_\ell = 10$ and $\sigma_w = 3$ are shown in Table \ref{tab:scot_results}. Larger values of $\sigma_\ell$ and $\sigma_w$ yield only marginal improvements for large increases in computational cost. 

By examining which elections have violations for specific methods, we confirm that there are no violations with one method that guarantee a violation with another method. It seems likely that there are no relationships of this kind, although we cannot say that for certain since we did not conduct an exhaustive search for all possible violations. The closest we come in this data is that IWVB violations under CC OM only occur in elections in which there is also an anomaly with either CC PM or EAR.

\begin{table}[]
    \centering
    \begin{tabular}{l||c|c|c|c|c|}
        & Scottish & Meek & EAR & CC OM & CC PM \\
        \hline
        ILVB Violations & 40 & 19 & 54 & 0 & 0 \\
        \hline
        IWVB Violations & 109 & 104 & 199 & 23 & 28 \\
        \hline
        IWVB$^*$ Violations & 104 & 103 & 181 & 0 & 0
    \end{tabular}
    \caption{The number of violations found in the Scottish election data using parameters $\sigma_\ell = 10$ and $\sigma_w = 3$ in our violation search. The first row shows how many elections we found that violate ILVB (Def \ref{def:ILVB}), the second row counts the elections that violate IWVB (Def \ref{def:IWVB}), and the third row counts elections that violate the alternative IWVB$^*$ (Def \ref{def:IWVB*}).}
    \label{tab:scot_results}
\end{table}

Finally, for about 10\% of found anomalies, the violation requires throwing out only a fraction of all possible ballots. When $\sigma_\ell = \sigma_w = 1$ (where we throw out all of $\mathcal B$ and do not try different quota values), we find about 10\% fewer violations, as can be seen by comparing Tables \ref{tab:scot_results} and \ref{tab:scot_results_simple_search}. EAR seems to contain the bulk of the non-monotone violations.

\begin{table}[]
    \centering
    \begin{tabular}{l||c|c|c|c|c|}
        & Scottish & Meek & EAR & CC OM & CC PM  \\
        \hline
        ILVB Violations & 38 & 19 & 48 & 0 & 0  \\
        \hline
        IWVB Violations & 94 & 93 & 164 & 21 & 25  \\
        \hline
        IWVB$^*$ Violations & 89 & 92 & 142 & 0 & 0 
    \end{tabular}
    \caption{The number of violations found in the Scottish election data when $\sigma_\ell = \sigma_w = 1$. Compare to Table \ref{tab:scot_results} to see how many violations are non-monotone and require keeping some ballots and removing others.}
    \label{tab:scot_results_simple_search}
\end{table}

\subsection{Party Dynamics}

In slightly more than half of elections with a violation, a seat can change parties without discarding any ballots that support either party. Those results are shown in Table \ref{tab:scot_results_parties}.

\begin{table}[]
    \centering
    \begin{tabular}{l||c|c|c|c|c|}
        & Scottish & Meek & EAR & CC OM & CC PM  \\
        \hline
        ILVB Party Swaps & 32 & 13 & 42 & 0 & 0 \\
        \hline
        IWVB Party Swaps & 64 & 60 & 126 & 11 & 15 \\
        \hline
        IWVB$^*$ Party Swaps & 62 & 60 & 113  & 0 & 0
    \end{tabular}
    \caption{The number of violations found in the Scottish election data where at least one seat changes party when $\sigma_\ell = 10$ and $\sigma_w = 3$. Compare to Table \ref{tab:scot_results} to see that slightly more than half of violations can result in changes to the party composition of the final committee.}
    \label{tab:scot_results_parties}
\end{table}

For the purposes of these results, we treat all independent candidates as members of one party. Therefore, these results represent a conservative estimate of how often these violations can affect the political composition of a council. Two independent candidates could have completely opposing views, but we would not notice that shift with the data we have.

\section{Proportionality and Voter Bloc Criteria}\label{proportionality}

Throughout this article we focus on methods that aim to provide proportional representation. As we have seen, with the exception of CC the proportional methods we study fail all of our proposed criteria. Perhaps these results point to a conflict between achieving proportionality and satisfying the irrelevant voter bloc criteria. In this section, we explore such potential tension.

First, we briefly review what is commonly meant by ``proportionality.'' In the  social choice literature, proportionality is commonly understood in terms of Dummett's proportionality for solid coalitions (PSC) criterion \cite{D84} and its variants \cite{W94}. Put simply, these criteria first establish a quota $q \in (\frac{V}{k+1}, \frac{V}{k}]$. The $q$-PSC criterion states that if a solid coalition of voters achieves a size of at least $jq$ for some $j \in \mathbb{N}$, then at least $j$ candidates supported by the coalition should earn seats, assuming the coalition supports at least $j$ candidates. If the coalition supports fewer than $j$ candidates, then $q$-PSC requires that all candidates supported by the coalition earn seats. (A set of voters form a \emph{solid coalition} if the candidates can be partitioned into two sets $\mathcal{C}_1$ and $\mathcal{C}_2$ so that every voter in the coalition prefers every candidate in $\mathcal{C}_1$ to every candidate in $\mathcal{C}_2$.)  PSC has been referred to as the most important requirement for proportional representation \cite{T95}, and is often used as the justification for using some form of STV. (All forms of STV used in this article, as well as EAR, satisfy $q$-PSC for the Hare quota $q=\frac{V}{k}$. CC does not satisfy $q$-PSC for any $q$, but in the Scottish election data, there are no violations for $q = \frac{V}{k}$.) There is a sizable literature which studies PSC and proportionality more broadly, and we do not attempt to survey it here. We refer the reader to \cite{AL20, D84, BP23, EFSS, W94} for deeper discussions of PSC and related criteria.

Practically speaking, in a given election the $q$-PSC criterion establishes a set of acceptable winning committees from which a $q$-PSC-compatible voting rule must choose. If a candidate earns more than $q$ first-place votes then any PSC-acceptable winning committee includes that candidate, for example. This requirement illustrates why there is tension between satisfying $q$-PSC and a criterion involving irrelevant voter blocs. If we remove enough ballots from an election, then $q$, which is a function of the number of voters, decreases. Consequently, some solid coalitions which were too small to make claims on seats with the original quota may now be large enough to earn seats with the smaller quota. That is, if we reduce the quota enough, then the resulting set of PSC-acceptable winning committees is a proper subset of the PSC-acceptable committees in the original election. As a result, satisfying PSC may require a new winning committee if the original winning committee is no longer in the set of PSC-acceptable committees after quota is decreased.

To demonstrate the challenge of trying to satisfy both PSC and irrelevant voter bloc criteria, and to illustrate the observations of the previous paragraph, we introduce a new family of PSC-compatible voting methods based on positional scoring. As mentioned in Section \ref{prelim}, all positional scoring rules satisfy our independence of irrelevant voter bloc criteria, and therefore adapting such rules to the PSC setting is a natural test case for trying to reconcile PSC and irrelevant voter bloc criteria. Furthermore, these new voting methods do not occur sequentially in rounds like most classical methods which satisfy PSC, and thus we can analyze our criteria independent of the sequential nature of STV-type voting rules.

A \emph{scoring rule} is a method which uses a scoring vector $(s_1, \dots, s_m)$ with $s_i\ge s_{i+1}$, $s_1>0$, and $s_i\ge 0$ to determine the number of points that a ballot contributes to a candidate's score, based on where that candidate is ranked on the ballot. If the ballot ranks a candidate in the $i$th ranking, then the candidate receives $s_i$ points from the ballot. The $k$ candidates with the largest total scores, summed across all ballots, are the winners.  Consider a modification of a scoring rule which is designed to satisfy $q$-PSC.

\begin{definition}
Fix $q \in (\frac{V}{k+1}, \frac{V}{k}]$ and a scoring vector $(s_1, \dots, s_m)$. The $q$-\emph{PSC scoring rule} works as follows. For a given preference profile, determine all candidate subsets of size $k$ which are compatible with $q$-PSC (this is computationally straightforward; see \cite{AL22}, for example). Assign each of these subsets a score by adding the scores, as determined by $(s_1, \dots, s_m)$, for each candidate in the subset. The subset with the highest score is the winning committee. For simplicity, we assume partial ballots are processed under the pessimistic model, so that candidates left off a ballot receive no points from it.
\end{definition}

That is, a $q$-PSC scoring rule is simply a positional scoring rule which can consider only winning committees compatible with $q$-PSC. We note that the following analysis can be easily adapted to other models of processing partial ballots.

Suppose we set $q$ just larger than $V/(k+1)$, as in STV. We prove that any $q$-PSC scoring rule violates ILVB (it is straightforward to show violation of IWVB as well).

\begin{prop}
Let $q=\lfloor V/(k+1)\rfloor +1$. Then any $q$-PSC scoring rule violates ILVB.
\end{prop}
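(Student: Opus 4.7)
The plan is to construct, for any fixed scoring vector $(w_1,\dots,w_m)$, a preference profile $P$ with $k=2$ together with a set of loser-only bullet ballots whose removal forces the winning committee to change. The mechanism to exploit is the one flagged in the paragraph preceding the proposition: removing loser-only ballots shrinks $V$ and hence $q=\lfloor V/(k+1)\rfloor+1$, while the scoring totals of every non-loser candidate are unchanged, so the set of PSC-acceptable committees can strictly shrink after removal and override the scoring rule's original choice.

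Concretely, I plant a solid coalition $\{X_1,X_2\}$ of size $s$ satisfying $2q'\le s<2q$, built out of truncated ballots $X_1>X_2$ and $X_2>X_1$, and arrange that $X_2$ already sits above quota via its singleton subcoalition $\{X_2\}$ while $X_1$ does not. Originally $X_2$ is the only PSC-forced candidate, so the second seat is decided by the scoring rule. I add $n_A$ bullet votes for a candidate $A$, with $n_A<q$ so that $A$ is not itself PSC-forced, but large enough that $A$ beats $X_1$ under the given scoring rule; this makes $\{X_2,A\}$ the original winning committee. I then pad $V$ with bullet votes distributed across several loser candidates, each receiving fewer than $q$ bullets so that no loser is individually forced, in amounts that put $q$ at the intended value and drop $q'$ below $s/2$. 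After removing the loser bullets, the $\{X_1,X_2\}$ coalition claims both seats, the winning committee is forced to be $\{X_1,X_2\}$, and this differs from $\{X_2,A\}$; since every removed ballot ranked only losing candidates of the original election, ILVB is violated.

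The main obstacle is that the key scoring inequality $n_A\,w_1>c_{X_1}\,w_1+c_{X_2}\,w_2$ (where $c_{X_i}$ counts the coalition ballots ranking $X_i$ first) depends on the ratio $\alpha=w_2/w_1\in[0,1]$. With $c_{X_2}\ge q$ it reduces to $n_A>c_{X_1}+\alpha c_{X_2}$, which together with $n_A<q$ demands $c_{X_1}<q(1-\alpha)$; this becomes infeasible as $\alpha\to 1$, i.e.\ for antiplurality-type vectors. In that regime I replace the coalition-based construction with a dual one in which $X_1$ is the originally forced candidate, via a solid coalition of $X_1>A$ truncated ballots of size at least $q$; the candidate $X_2$ is a singleton coalition of $c_{X_2}\in[q',q-1]$ bullet ballots so that $X_2$ is forced only after removal; and the scoring boost for $A$ comes from the $X_1>A$ ballots placing $A$ in position two. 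The analogous inequality $c_{X_1A}\,w_2+n_A\,w_1>c_{X_2}\,w_1$ is easy to satisfy when $\alpha$ is large because $c_{X_1A}\ge q$. The two constructions together handle every scoring vector with $\alpha\in[0,1]$; in the crossover region $\alpha\approx 1/2$ one simply chooses $q$ large enough for the strict inequalities to leave integer room, and spreads the loser bullets over enough distinct losers to control $q'$ independently of $q$.
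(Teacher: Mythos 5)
Your proposal is correct and is essentially the paper's own argument: a case split on the ratio $s_2/s_1$, with one profile where dropping the quota upgrades a two-candidate solid coalition from one forced seat to two, and a dual profile where dropping the quota newly forces a plurality-style candidate into every acceptable committee. The paper simply instantiates both cases with concrete $999$-voter, $k=2$ profiles (splitting at $s_2 = s_1/333$) rather than carrying the parameters $\alpha$, $q$, $q'$ symbolically.
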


\begin{proof}

\begin{table}[h]
    \centering
    \begin{tabular}{cccc}

        \begin{tabular}{c|c|c|c}
        333&1&333&332\\
        \hline
        \hline
        $A$&$B$&$C$&$D$\\
         & & $D$&$C$\\
        \end{tabular}

        &&
        \begin{tabular}{c|c|c}
        1&666&332\\
        \hline
        \hline
        $A$&$C$&$B$\\
        & $D$ &\\
        \end{tabular}
        
    \end{tabular}
    \caption{Preference profiles illustrating why a $q$-PSC scoring rule fails ILVB.}
    \label{scoring}
\end{table}

Consider the left preference profile in Table \ref{scoring}. Note the profile contains 999 voters and thus $q=334$. Let $k=2$. Then $q$-PSC requires only that $C$ or $D$ earn a seat in the winning committee, since the solid coalition supporting $C$ and $D$ exceeds 1/3 of the voters but is less than 2/3. Thus, we calculate the aggregate scores of any committee of size two containing $C$ or $D$. If $s_2>s_1/333$, then the winning committee under a $q$-PSC scoring rule is $\{C,D\}$. However, if we remove one of the bullet votes for $B$, the quota changes to $q=333$ and $A$ is required to earn a seat by $q$-PSC since the only $q$-PSC compatible committees in this case are $\{A,C\}$ and $\{A,D\}$. Of these two committees, the $q$-PSC scoring rule chooses $\{A,C\}$, and we see a violation of ILVB.

If $s_2\le s_1/333$, then consider the right preference profile of Table \ref{scoring}. As in the previous election, $q$-PSC requires only that the winning committee contain $C$ or $D$. Because $s_2$ is so small relative to $s_1$, the winning committee is $\{B,C\}$. However, after we remove the one bullet vote for $A$ the quota decreases to 333, and $q$-PSC requires that the winner set is $\{C,D\}$.\end{proof}

Thus, even though positional scoring rules satisfy our criteria, if we attempt to combine these rules with a PSC-type constraint then the resulting methods fail, despite the fact that the $q$-PSC scoring rules do not occur sequentially like STV. We conjecture that there is no voting rule which satisfies $q$-PSC and also satisfies ILVB or IWVB but leave the proof for further research.

\section{Conclusion}\label{conclusion}

The ILVB and IWVB criteria represent an attempt to articulate reasonable fairness criteria regarding partial ballots in the real-world multiwinner setting where such ballots are very common. We argue that if we remove a set of partial ballots, only the candidates who are ranked on those ballots should be negatively affected. Our analysis shows that the voting rule of Chamberlin-Courant performs well with respect to these criteria while the expanding approvals rule performs relatively poorly, with STV-type rules falling in between.

Future research could investigate how well other voting methods perform with respect to these criteria. In particular, it would be interesting to investigate the compatibility of ILVB and IWVB with proportionality axioms other than PSC. Other work could provide better algorithms to search the real-world dataset for violations, perhaps increasing the amount of violations found. Finally, the introduction of new criteria invites a discussion about how important such criteria are, especially as compared to previously studied axioms. For example, are ILVB and IWVB as normatively desirable as candidate monotonicity or consistency \cite{EFSS}? We argue the answer is Yes, but others may feel that our criteria are less important.

\end{document}